\newcommand{\va}{\mbox{${\bf a}$}}
\newcommand{\vx}{\mbox{${\bf x}$}}
\newcommand{\vh}{\mbox{${\bf h}$}}
\newcommand{\ve}{\mbox{${\bf e}$}}
\newcommand{\vw}{\mbox{${\bf w}$}}
\newcommand{\vwt}{\mbox{${\bf \tilde w}$}}
\newcommand{\vy}{\mbox{${\bf y}$}}
\newcommand{\vz}{\mbox{${\bf z}$}}
\newcommand{\vone}{\mbox{${\bf 1}$}}
\newcommand{\mA}{\hbox{{\bf A}}}
\newcommand{\mD}{\hbox{{\bf D}}}
\newcommand{\mG}{\hbox{{\bf G}}}
\newcommand{\mH}{\hbox{{\bf H}}}
\newcommand{\mHt}{\mbox{${\bf \tilde H}$}}
\newcommand{\mI}{\hbox{{\bf I}}}
\newcommand{\mQ}{\hbox{{\bf Q}}}
\newcommand{\mR}{\mbox{{$\bf R$}}}
\newcommand{\mS}{\mbox{{$\bf S$}}}
\newcommand{\mW}{\hbox{{\bf W}}}
\newcommand{\grg}{\gamma}
\newcommand{\gd}{\delta}
\newcommand{\gre}{\varepsilon}
\newcommand{\gth}{\theta}
\newcommand{\gl}{\lambda}
\newcommand{\gs}{\sigma}
\newcommand{\gt}{\tau}
\def\bm#1{\mbox{\boldmath $#1$}}
\newcommand{\vgz}{\mbox{$\bm \zeta$}}
\newcommand{\trace}{\ensuremath{\hbox{tr}}}
\newtheorem{theorem}{Theorem}
\newtheorem{lemma}[theorem]{Lemma}
\newtheorem{prop}{Proposition}
\newtheorem{claim}[theorem]{Claim}
\newtheorem{definition}{Definition}
\newtheorem{remark}{Remark}
\newcommand{\beq}{\begin{equation}}
\newcommand{\eeq}{\end{equation}}
\newcommand{\bea}{\begin{array}}
\newcommand{\ena}{\end{array}}
\newcommand{\bds}{\begin {description}}
\newcommand{\eds}{\end {description}}
\newcommand{\bdf}{\begin{definition}}
\newcommand{\blm}{\begin{lemma}}
\newcommand{\edf}{\end{definition}}
\newcommand{\elm}{\end{lemma}}
\newcommand{\bthm}{\begin{theorem}}
\newcommand{\ethm}{\end{theorem}}
\newcommand{\bprp}{\begin{prop}}
\newcommand{\eprp}{\end{prop}}
\newcommand{\bcl}{\begin{claim}}
\newcommand{\ecl}{\end{claim}}
\newcommand{\bcr}{\begin{coro}}
\newcommand{\ecr}{\end{coro}}
\newcommand{\bquest}{\begin{question}}
\newcommand{\equest}{\end{question}}
\newcommand{\rarrow}{{\rightarrow}}
\begin{document}
\title{The  Interference Channel Revisited: Aligning Interference by Adjusting Receive Antenna Separation}
\author{Amir Leshem\thanks{Amir Leshem is with Faculty of Engineering, Bar-Ilan University (leshema@biu.ac.il). The work of Amir Leshem was partially supported by ISF grant 1644/18 and ISF-NRF grant  2277/16. Parts of this paper will be presented in IEEE ISIT 2019 \cite{leshem2019ergodic}.} and Uri Erez\thanks{U. Erez is with Tel Aviv University, Tel Aviv,
Israel (email: uri@eng.tau.ac.il). The work of U. Erez was supported by by the ISF under Grant 1956/15.}}
\maketitle

\begin{abstract}

It is shown that a receiver equipped with two antennas may null an arbitrary large number of spatial directions to any desired accuracy, while maintaining the interference-free signal-to-noise ratio, by judiciously adjusting the distance between the antenna elements.
The main theoretical result builds on ergodic theory. The practicality of the scheme in moderate signal-to-noise systems is demonstrated for a scenario where each transmitter is equipped with a single antenna and each receiver has two receive chains and where the desired spacing between antenna elements is achieved by selecting the appropriate antennas from a large linear antenna array.
We further extend the proposed scheme to show that interference can be eliminated also in specular multipath channels as well as multiple-input multiple-output interference channels where a single extra receiver suffices to align all interferers into a  one-dimensional subspace.
To demonstrate the performance of the scheme, we
show significant gains for interference channels with four as well as six users, at low to moderate signal-to-noise ratios ($0-20$ dB). The robustness of the proposed technique to small channel estimation errors is also explored.

\end{abstract}


%
\section{Introduction}

The information-theoretic model of an interference channel is an abstraction that is motivated by the physical channel model of transmitter-receiver pairs that communicate over a shared  wireless medium. While abstraction often leads to  insights that may then be translated to more complicated real-life models, it is now recognized that the interference channel is an example that  generalization also carries with it the risk of over-abstraction, i.e., losing some key features of the true problem. Therefore,  it is worthwhile to re-examine the problem formulation from time to time as has been demonstrated, e.g., in the case of magnetic recording channels; see e.g., \cite{immink1998codes} for an overview of the  evolution of the physical models  and its impact on the relevant  information-theoretic and coding techniques. Another example is the evolution that led to the V.90 voice-band modem \cite{kim2004v,humblet1996information}.
Indeed, works on interference alignment  \cite{cadambe2008interference,motahari2014real,nazer2012ergodic} reveal that a two-user model is non-representative and further that the linear Gaussian model allows for elegant schemes that do not carry over to the general interference channel model.

In the present paper, we argue that taking a further step in bringing back into the model some simple considerations stemming from the physical propagation medium yields new insights on how interference may be nearly eliminated  by an effective signal processing method, thereby resolving some of the drawbacks of
existing interference alignment  techniques. Specifically, we prove that all interferers can me nearly aligned into a one-dimensional subspace that is nearly orthogonal to the direction of the desired user.
This property is crucial as it eliminates the power penalty involved in the interference alignment techniques of  \cite{cadambe2008interference,motahari2014real} which restrict the gains to the very high signal-to-noise ratio (SNR) regime. Further, a major advantage of the proposed method over existing interference alignment schemes \cite{cadambe2008interference,motahari2014real,nazer2012ergodic}  is that it relies on the availability of channel state information (CSI) at the receiver side only.
This property is of significant importance as CSI feedback to the transmitters is recognized
as a major hurdle for realizing the gains of interference alignment in practical systems; see, e.g., \cite{rao2013csi,chen2014performance,thukral2009interference}.

To convey  the essence  of the advocated approach, consider standard interference nulling as performed in multi-antenna wireless communication. It is well known that given an adaptive array with $N_r$ receive antennas, one can null out $N_r-1$ (single-antenna) interferers and enjoy a full degree-of-freedom (DoF) for one (single-antenna) desired source. This leads to low utilization of the receive antennas, since only $\nicefrac{1}{N_r}$ of the DoFs of a single-user MIMO system are attained.
Nonetheless, receive beamforming is very simple to implement and is robust, relying only on receiver-side CSI. Therefore, it is the workhorse of modern wireless communication systems.

Nevertheless, as mentioned above, works on achievable rates for the interference channel \cite{cadambe2008interference,motahari2014real,nazer2012ergodic} demonstrate that  half of the degrees of freedom can  be achieved, independent of the number of interferers, even when employing a single antenna at each node. While appealing from  a theoretical point of view, interference alignment techniques face some major challenges in real-life applications; see, e.g., \cite{el2013practical}. Beyond  knowledge of full CSI of the complete interference network at all transmitters being required, the results are highly asymptotic. The SNR at which a tangible improvement over naive schemes is achieved is extremely high or requires very specific system configurations, such as a symmetric  or cyclic interference channel models \cite{ordentlich2014approximate,zhao2016interference}  or sporadic low-dimensional MIMO  configurations  \cite{yetis2010feasibility}.


Most of the work on the interference channel concentrated on a simplified channel model which assumes that the wireless channel is represented by an arbitrary matrix with random elements. However, as is recognized for many years in the communications theory literature, wireless channels are better represented as a combination of a small number of reflections with complex random coefficients caused by the small scale fading at the reflectors. Examples of such models include the well known  Saleh and Valenzuela model \cite{saleh1987statistical}  that is prevalent in recent applications of wireless communications  (see e.g., \cite{rappaport2013millimeter} and the references therein), as well as ray-based MIMO models \cite{almers2007survey,shiu2000fading,bolcskei2002capacity,oestges2003physical,xu2004generalized}. These models are characterized by a finite (typically small) set of reflection clusters with well defined direction-of-departure (DoD) and direction-of-arrival angles (DoA) together with fading coefficients. 

We consider the particular class of line-of-sight (LOS) interference channels as well as specular multipath channels  that are the basis for physical modeling of recent of wireless communication systems. We further extend the results to the MIMO interference channel.


Our main theoretical results may be translated to a practical implementation by using  a large $\gl$-spaced array with $N_r$ receive antennas and a selection mechanism as depicted in Figure~\ref{fig:selection}.
The main result may then be restated as: Selecting  {\em two} antennas at distance $d=m \lambda$, $m$ being an integer, out of a large linear array suffices to approximately  null out {\em any} number of sources in the plane with negligible noise amplification.

Accordingly, for a $K$-user interference channel with single-antenna transmitters, we achieve $K$ degrees of freedom, i.e., a utilization factor of one half, similar to the best achievable DoF of existing interference alignment schemes. In contrast to the latter, the proposed scheme  requires only receiver-side  CSI. Moreover, it achieves substantial gains  at practical values of SNR.
\begin{figure}[htp]
\begin{centering}
\psfrag{A}{$t$}
\psfrag{B}{\tiny {$N_t$ \ \  }}
\psfrag{C}{\tiny $N_r$ \ \ }
\psfrag{D}{$r$}
\includegraphics[width=\columnwidth]{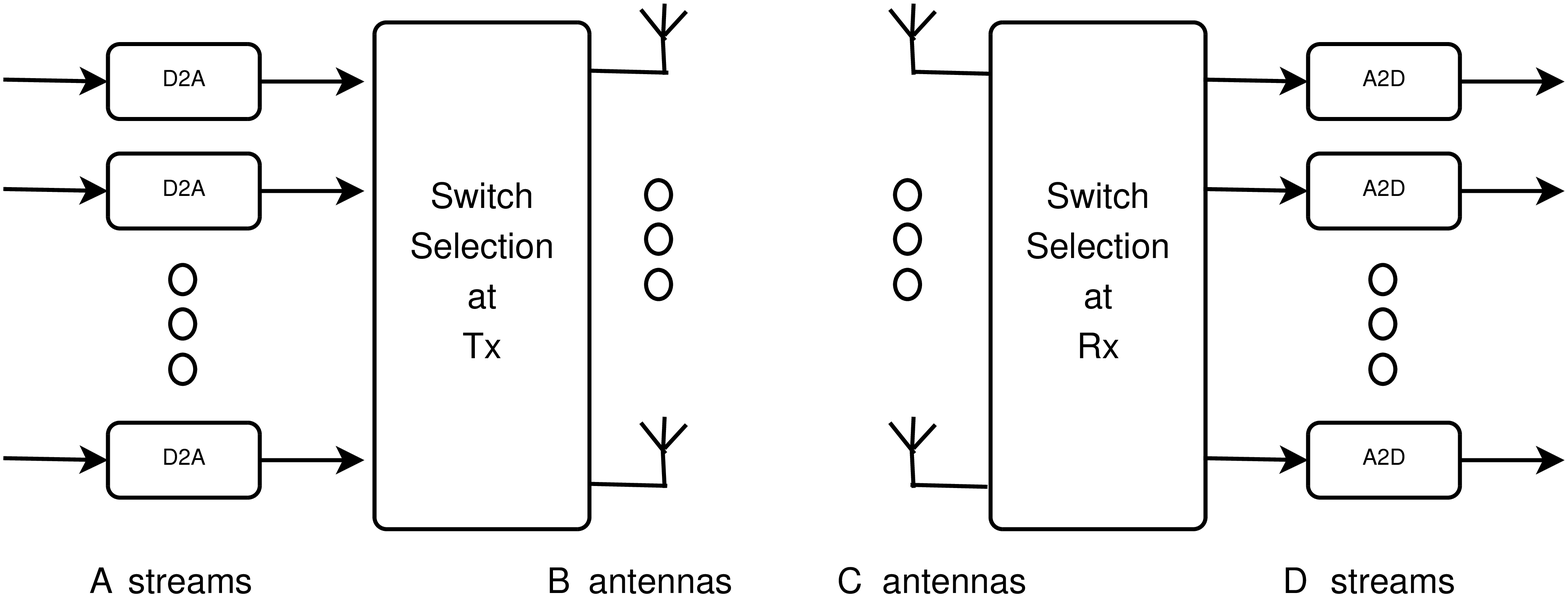}
\end{centering}
\caption{Setting antenna spacing via selection.}
\label{fig:selection}
\end{figure}


To gain insight into the proposed approach, consider a four-user LOS interference channel where we focus on the receiver of user 1. The direction of the transmitters are $[175^o, 59^o, 151^o, 133^o]$ with respect to the antenna array of user 1. Selecting antennas having a separation of $5\gl$, yields the beam pattern depicted in  Figure~\ref{fig:beampattern}.  The desired user's gain is close to 2 which is the interference-free gain, while the signals of all other users are almost completely suppressed. Theorem~\ref{thm:main} proves that a beam pattern satisfying this property almost always exists, for some pair of antennas,  provided that the array is sufficiently large.

\begin{figure}[htbp]
\centering
\includegraphics[width=0.55\columnwidth]{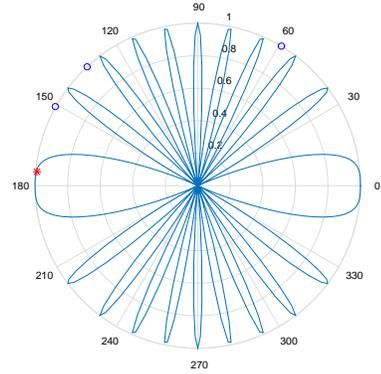}
\caption{Optimal beam pattern of user 1 in a four-user interference channel where the DoA with respect to the receive array of user 1 are $[175^o, 59^o, 151^o, 133^o]$. The optimal antenna separation is $d=5\gl$ for an array with $d_{\max}=25 \gl$. The powers of  all signals are assumed to be equal.  
}
\label{fig:beampattern}
\end{figure}
\subsection{Further related work}

Both the Cadambe-Jafar and the Motahari et al. techniques are very asymptotic in nature and require high resolution transmit-side CSI as well as very high SNR conditions to start to play a beneficial role.
Extensions to more general MIMO channels have subsequently revealed, e.g., \cite{bresler2014feasibility, yetis2010feasibility}  that the DoF alignment gains are much more modest under more realistic assumptions.
As a partial remedy, antenna switching has been proposed as a means for improving the channel coefficients to facilitate alignment \cite{gou2011aiming}.

Apart from the obvious connection to  works on the interference channel, the idea of altering the physical propagation channel bears some similarity to ``media-based modulation", ``spatial modulation" and ``index modulation" schemes; see \cite{khandani2014media, naresh2017media, basar2017index,ishikawa201850} for an overview of these inter-related concepts.
In all of these works, the physical medium is \emph{modulated} based on the information-bearing signal.  In contrast to these works, the present work only requires sub-sampling of the spatial channel at the receiver and can cope with an arbitrary number of interfering signals.

The rest of this paper is organized as follows. Section~\ref{sec:2} gives a description of the MIMO interference channel model where both transmitter and receiver can only utilize (i.e., must select) a prescribed number of antenna elements. Section~\ref{sec:LOS_int_channel} specializes the model to a LOS scenario with single-antenna transmitters.
The main result of the paper is presented in Section~\ref{sec:4}
in the context of LOS channels. It is extended to specular multipath as well as MIMO scenarios in Sections~\ref{sec:multipath} and \ref{sec:MIMO}. Optimization algorithms for antenna selection and simulation results are then given in Section~\ref{LOS_opt} and Section~\ref{sec:sim}, respectively. The paper concludes with a discussion of possible extensions in Section~\ref{sec:disc}.

\section{Interference Channel with Selection}
\label{sec:2}
We extend the standard interference channel model to a model where
the distance between receive antennas may be adjusted. We will focus and describe such a system via antenna selection  applied to a large linear antenna array.
In such a system, the number of transmit/receive chains for each user is not necessarily equal to the number of transmit/receive antennas, respectively.

Thus, we consider a $K$-user interference channel where each transmitter has $t$ transmit chains and $N_t$ antennas and each receiver has $r$ receive chains and $N_r$ antennas;  see Figure~\ref{fig:selection} depicting a link between one transmitter and one receiver. This configuration, where all transmitter/receiver pairs have the same parameters, is denoted as the symmetric $t/N_t/N_r/r$ interference channel; nonetheless, the results carry over to  non-symmetric configurations.
We note that  this model is of significance to modern systems that utilize massive antenna arrays.

We now formalize the channel model, beginning with the traditional case where $t=N_t$ and $r=N_r$.
Denoting the $N_r \times N_t$ channel matrix from transmitter $j$ to receiver $i$ by $\mH_{ij}$, the  received signal is given by
\begin{align}
\label{eq:IC_model}
    \mathbf{y}_i=\sum_{j=1}^K \mathbf{H}_{ij} \mathbf{x}_j + \mathbf{z}_i, \quad i=1\ldots K,
\end{align}
where $\mathbf{z}_i$ is i.i.d. (between users and over time) circularly-symmetric complex Gaussian noise with variance $\sigma^2$ per complex dimension.


The $t/N_t/N_r/r$-interference channel can be described by  requiring that  for each user $i$, the transmitter and receiver must employ linear front-end selection matrices $\mS_T\in \{0,1\}^{N_t \times t},\mS_R \in \{0,1\}^{N_r \times r} $, each having exactly $t$ and $r$ non-zero elements that are in the same row or column, respectively. Applying selection matrices $\mS_{T,i},\mS_{R,i}$  at both ends of the link of  each user, \eqref{eq:IC_model} becomes
\begin{align}
\label{eq:IC_model2}
    \mathbf{y}_i&=\sum_{j=1}^K \mS_{R,i}^H \mathbf{H}_{ij} \mS_{T,j} \mathbf{x}_j + \mathbf{z}_i, \nonumber  \\
     &=\sum_{j=1}^K \mathbf{H}^S_{ij} \mathbf{x}_j + \mathbf{z}_i.
\end{align}

We note that the $t/N_t/N_r/r$-interference channel model may be formalized in an information-theoretic framework as an input/output constrained channel. Specifically, the input constraint amounts to a imposing the constraint that the encoding function $f$ must satisfy:
 $$m\in \left\{1,\ldots,2^{nR}\right\} \stackrel{f}{\rightarrow} \left(\mathbf{x}_1(m),\ldots,\mathbf{x}_n(m)
 \right),$$
where the codeword entries $\vx_i(m)$ must be $t$-sparse, with the same sparsity pattern for all time instances $i$.

Similarly, the output constraint can be formalized as a mismatched decoding constraint requiring that the decoding function $g$
$$\left(\mathbf{y}_1,\ldots,\mathbf{y}_n\right)\stackrel{g}{\rightarrow} m\in \left\{1,\ldots,2^{nR}\right\},$$
has the form $g\left(\mathbf{y}_1,\ldots,\mathbf{y}_n\right)=g'(\vgz_1,\ldots,\vgz_n)$  where
$\vgz_i=\mathbf{s}_r \odot \mathbf{y}_i$ ($\odot$ denoting component-wise multiplication) and  where $\mathbf{s}_r$
must satisfy the support constraint $\|\mathbf{s}_r\|_0 \leq r$.

\section{Line-of-sight interference channels}
\label{sec:LOS_int_channel}
The use of high-frequency communication in general and mm-wave and THz frequency communication has prompted recent interest in LOS communication channels \cite{witrisal2016high, heath2016overview, boccardi2014five}. Moreover, such channels form the basis for the more elaborate channel models described in Sections \ref{sec:multipath} and \ref{sec:MIMO}.

We now describe the $1/1/N_r/2$-LOS interference channel. The antenna array is assumed to have $N_r$ uniformly $\lambda$-spaced elements, out of which two are switched into the receive chains.
We assume that the receiver has full directional CSI.

We make the following assumptions and comments:
\begin{enumerate}[label=A\arabic*]
    \item For the $1/1/N_r/2$ channel, $S_{T,i}$ is trivial for all $i$.
    \item For simplicity, we assume a linear array and planar geometry where all sources are far-field point sources.
    \item Let $\vh(\theta;d)$ be the array response towards direction $\theta$ with separation $d$ between the two selected antennas. Thus, the array response is given by
 \begin{equation}
 \label{def:h}
    \vh(\gth;d)=\frac{1}{\sqrt{2}}\left[1, e^{j 2\pi d \cos \theta} \right]^T.
\end{equation}
    \item The vectors $\vh(\theta_{i,j};d)$  consist of array manifold vectors for signals impinging on the array of receiver $i$ from direction $\theta_{i,j}$.
  \item We assume two receive chains per user, i.e. the matrices $\mH^S_{i,j}$ in \eqref{eq:IC_model2} are reduced to $2 \times 1$ LOS vectors $\vh_{i,j}=\vh(\gth_{i,j};d)$.
  \item We  assume that each receiver has perfect CSI w.r.t. all channel gains corresponding to impinging signals.
  \item Transmitters on the other hand need not have access to any CSI beyond the rate at which they should communicate with their respective receiver.
  \item Without loss of generality, we use the array manifold as the channel, since the signal attenuation can be absorbed in the power of the signal $x_j$.
  \item \label{independence} We assume that the locations of all  transmitters and receivers are independently  uniformly distributed in angle with respect to the origin.
  \item We assume that the power of all transmitters is bounded by $P$.
\end{enumerate}
Note that by \ref{independence}, the incidence angle of each received signal is uniformly distributed as well.
Under this setting, it suffices to focus on the receiver of a single user $i$  as the operations at all receivers will be similar.

It follows that \eqref{eq:IC_model} becomes:
\begin{equation}
    \mathbf{y}_i=\sum_{j=1}^K \vh(\gth_{i,j};d_i) \mathbf{x}_j + \mathbf{z}_i, \quad i=1\ldots K,
\end{equation}
Using a received beamforming vector $\vw_i=\frac{1}{\sqrt{2}}[1, e^{j\phi_i}]$, the received signal of the $i'$th user becomes:
\begin{equation}
\label{eq:rx_bf}
    y_i=\sum_{j=1}^K \kappa(\gth_{i,j};d_i) \mathbf{x}_j + z_i,
\end{equation}
where
\begin{align}
\label{def:kappa}
    \kappa(\gth_{i,j};d_i)=\vw_i^T\vh(\gth_{i,j};d_i).
\end{align}

Therefore,
  \begin{align}
      g(\gth_{i,j};d_i)=\left|\kappa(\gth_{i,j};d_i)\right|^2= \frac{1}{{2}}\left|1+e^{j (2\pi d_i cos \theta_{i,j}+\phi_i)}\right|^2.
\end{align}
Straightforward algebraic simplification yields:
\begin{equation}
     g(\gth_{i,j};d_i)=\frac{1}{2}\left[1+\cos(2\pi d_i \cos(\theta_{i,j})+\phi_i)\right].
     \label{eq:thetaij}
\end{equation}

 In the next section we show that by properly selecting $d_i, \phi_i$, we can obtain the following:
\begin{equation}
\label{eq:design_eq}
 g(\gth_{i,j};d_i) \approx \gd_{i,j}, \quad j=1,..,K,
\end{equation}
where $\gd_{i,j}$ is Kronecker's delta function.



\section{Eliminating Interference via Ergodic Nulling}
\label{sec:4}
The classical signal processing literature deals primarily with Nyquist-resolution beamformers, where at least some antennas are separated by at most $\gl/2$. In this case, the array has a single main lobe in the desired direction, and the resolution of the array is determined   by the farthermost elements. This is  because  that when all distances between antennas are larger than   $\gl/2$,  an ambiguous beam pattern occurs. An example of this phenomenon is depicted in Figure~\ref{fig:beampattern}.
Interestingly, an ambiguous beam pattern, can prove extremely advantageous when dealing with interference, since such patterns have multiple nulls.

We now show that by judiciously adjusting (via selection) the distance between the receive antennas,  we can (with probability 1) suppress all interferers to any desired level. 
Specifically, we demonstrate that for almost all angles of arrival, one can approach the interference-free rate of any desired user, i.e., the rate achievable with a two-antenna receiver when no interference is present. This is proved using the uniform distribution property of  sequences modulo $1$.

Denote $\vwt$ the effective beamforming vector after applying the selection matrix, i.e.,
$\vwt_i=\mS_r \vw$, where $\vw$ is an $N_r$-dimensional beamforming vector.
We show that that the beamforming vector $\vw_i$ can be chosen as
\[
\vwt_i=\frac{1}{\sqrt 2} [1,1]^T = \mS_r \vone,
\]
where $\vone$ is the all-ones vector of length $N_r$.
The proof utilizes an integer  antenna spacing (in terms of wavelength $\lambda$).

\begin{theorem}[Main Theorem]
\label{thm:main}
Assume that the directions $\gth_{i,1},..,\gth_{i,K}$ are such that $\cos(\gth_{i,1}),...,\cos(\gth_{i,K})$ are independent over $\mathbb Q$. Then, for  every $\gd>0$, one can find a spacing $d \in \mathbb N$ such that applying receive beamforming with the vector $\vwt_i=\frac{1}{\sqrt 2} [1,1]^T$  yields:

\beq
 \label{eq:array_gain}
\begin{array}{lcll}
g(\gth_{i,k};d_i)&<& \gd,  \quad \quad k \neq i\\
g(\gth_{i,i};d_i)&>&1-\gd.
\end{array}
\eeq
\end{theorem}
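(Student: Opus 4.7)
The plan is to deduce the theorem directly from the classical Weyl equidistribution theorem (equivalently, Kronecker's density theorem) on the $K$-dimensional torus $\mathbb{T}^K = (\mathbb{R}/\mathbb{Z})^K$. The bulk of the work is a translation of the asserted gain inequalities \eqref{eq:array_gain} into a density statement for an arithmetic orbit on $\mathbb{T}^K$, after which the ergodic input does the real lifting.

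First, I would specialize \eqref{eq:thetaij} to the proposed beamformer $\vwt_i = \frac{1}{\sqrt{2}}[1,1]^T$, which corresponds to $\phi_i = 0$, giving
\[
g(\gth_{i,j};d) = \tfrac{1}{2}\bigl[1 + \cos(2\pi d \cos\gth_{i,j})\bigr].
\]
By uniform continuity of cosine, the nulling requirement $g(\gth_{i,k};d) < \gd$ for $k \neq i$ is equivalent to forcing the fractional part $\{d\cos\gth_{i,k}\}$ into a neighborhood of $1/2$ whose radius $\eta(\gd)$ shrinks to $0$ as $\gd \to 0$, while the pass-through requirement $g(\gth_{i,i};d) > 1-\gd$ is equivalent to forcing $\{d\cos\gth_{i,i}\}$ into an $\eta(\gd)$-neighborhood of $0$ mod $1$. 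Thus it suffices to exhibit an integer $d$ for which the orbit point
\[
v_d := \bigl(\{d\cos\gth_{i,1}\},\ldots,\{d\cos\gth_{i,K}\}\bigr) \in \mathbb{T}^K
\]
lies in the open box $B_\gd$ of radius $\eta(\gd)$ about the target $v^\star$ whose $i$th coordinate is $0$ and whose other coordinates all equal $1/2$.

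Second, I would invoke Kronecker's approximation theorem: the orbit $\{d \cdot (\cos\gth_{i,1},\ldots,\cos\gth_{i,K}) \bmod 1 : d \in \mathbb{N}\}$ is dense in $\mathbb{T}^K$ whenever $\{1,\cos\gth_{i,1},\ldots,\cos\gth_{i,K}\}$ is linearly independent over $\mathbb{Q}$. Under the hypothesis that the $\cos\gth_{i,k}$ are $\mathbb{Q}$-independent (which, by assumption~\ref{independence}, holds almost surely in the angles), density applies and any nonempty open box $B_\gd \subset \mathbb{T}^K$ is hit by some $v_d$, delivering the required integer $d$.

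The only genuine technical point, and the one I would handle most carefully in a complete write-up, is the precise meaning of ``independent over $\mathbb{Q}$'' in the hypothesis. Strict density on the full torus requires the affine version (joint independence with the constant $1$), not merely linear independence of the $\cos\gth_{i,k}$ themselves; if only the latter is assumed, one might have density only on a lower-dimensional sub-torus. Under the generic-angle assumption~\ref{independence} the event that a nonzero integer combination of the $\cos\gth_{i,k}$ hits a nonzero integer is a null set, so this distinction is harmless and the argument concludes. I expect this to be the only subtlety — once the affine-independence version of Kronecker/Weyl is in hand, the theorem follows immediately from continuity of cosine and the density of the orbit at the designed target point $v^\star$.
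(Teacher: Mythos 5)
Your proposal is correct and follows essentially the same route as the paper: both reduce \eqref{eq:array_gain} via continuity of $g$ to placing the orbit point $d\,(\cos\gth_{i,1},\ldots,\cos\gth_{i,K}) \bmod 1$ inside a small box centered at $0$ in the $i$th coordinate and at $\tfrac{1}{2}$ in the others, and then invoke Weyl's equidistribution theorem to produce such an integer $d$. Your remark that full-torus equidistribution actually requires $\mathbb{Q}$-linear independence of $\{1,\cos\gth_{i,1},\ldots,\cos\gth_{i,K}\}$ (the affine version) rather than of the cosines alone is a correct refinement of a point the paper glosses over, and it is indeed harmless under the almost-sure genericity of the angles.
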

Note that Theorem~\ref{thm:main} implies that all interference directions are aligned to a (complex) one-dimensional subspace, nearly orthogonal to the array manifold of (the desired) user $i$.
Not only does this provide a full DoF per transmitter but also no noise amplification occurs due to the near-orthogonality of interference and desired signals.

\begin{proof}
To prove the main theorem, recall the following definition by Weyl  (see \cite{kuipers2012uniform}):
\bdf
A $K$-dimensional sequence of real vectors $\vx_m:m\in {\mathbb N}$ is uniformly distributed modulo $1$ if for every box
\[
B=\prod_{k=1}^K[a_k,b_k],  \quad B \subseteq [0,1)^K
\]
\begin{align}
&\lim_{M \rarrow \infty}\frac{\left|\left\{1\le m \le M : \left(\vx_m \mod 1 \right)\in B \right \}\right|}{M} \nonumber \\
&=\prod_{k=1}^K(b_k-a_k)
\end{align}
\edf
Weyl \cite{weyl1916gleichverteilung}  proved that whenever $\vx=[x_1,...,x_K]^T$ is a vector of irrational  real numbers that are linearly independent over $\mathbb Q$, the sequence
$\left\{m\vx \mod 1 :m \in \mathbb N \right\}$ is uniformly distributed modulo $1$. In the present context, assume that $\cos(\gth_{i,1}),...,\cos(\gth_{i,K})$ are linearly independent over $\mathbb{Q}$. Note that this holds with probability one.
By Weyl's theorem the sequence $m [\cos(\gth_{i,1}),...,\cos(\gth_{i,K})]$, \mbox{$m \in \mathbb{N}$}, is uniformly distributed  modulo 1. Define a box
\beq
B=\prod_{k=1}^K B_k
\eeq
where
\[
B_k= \left\{
\begin{array}{cl}
[0, \gre'] & k=i  \\
\left[\frac{1-\gre'}{2},\frac{1+\gre'}{2}\right] & k\neq i  \\
\end{array}
\right.
\]
where $\gre'=\frac{\gre}{2 \pi}$
Therefore, we can find a $d$ such that
\begin{eqnarray}
2 \pi d \cos(\gth_{i,i}) <\gre \mod 2\pi \\
\pi-{\frac{\gre}{2}<2 \pi d \cos(\gth_{i,j}) }<\pi+\frac{\gre}{2} \mod 2\pi
\end{eqnarray}
By continuity of $g(\gth_{i,j};d_i)$, as given in \eqref{eq:thetaij} , for a given $\gd$, we can find an $\gre$ such that \eqref{eq:array_gain} is satisfied.
\end{proof}

\section{Multipath Channels and related interference channel models}
\label{sec:multipath}
We now show that the proposed approach generalizes to the case of multipath \cite{tseviswanath,gallager2008principles} with  a finite number of reflections.

We assume a physical channel model such that the carrier frequency is much larger than the signal bandwidth, which is typical in cellular and indoor wireless communications.
We first show that in the case of a $1/1/N_r/2$ interference channel with inter-symbol interference (ISI), we can approach  the single-user (with same number of receive chains) interference-free rate.


Adhering to discrete time and allowing a different path loss for each reflection, the channel, as given in \eqref{eq:IC_model2},  now generalizes to
\begin{align}
    \mathbf{y}_i(t)=\sum_{k=1}^K \sum_{\ell=1}^{L_{i,j}}  \grg_{i,k,\ell}\mathbf{h}(\gth_{i,k, \ell};d_i)  \mathbf{x}_k(t-\tau_{i,k, \ell}) + \mathbf{z}_i(t),
    \label{eq:IC_model3}
\end{align}
for $i=1,\ldots,K$, where $L_{i,j}$ is the number of reflections of the $j$'th signal received by user $i$, and $\grg_{i,k,\ell}$ is the complex path loss of the signal arriving from direction $\gth_{i,k, \ell}$ and $\vh$ is defined in \eqref{def:h} incorporating the local scattering at the reflector.  Following standard models, we can assume that  $\grg_{i,k,\ell}$  is either stochastic, e.g., Rician or Rayleigh, or deterministic.

Let
\beq
B_i=\prod_{k=1}^K \prod_{\ell=1}^{L_{i,k}} B_{i,k, \ell}
\eeq
where for all $1 \le \ell \le L_{i,k}$:
\[
B_{i,k, \ell}= \left\{
\begin{array}{cl}
[0, \gre'] & k=i, \ell=1,...,L_{i,i}  \\
\left[\frac{1-\gre'}{2},\frac{1+\gre'}{2}\right] & k\neq i, \ell=1,...,L_{i,k}  \\
\end{array}
\right.
\]
and $\gre'=\frac{\gre}{2 \pi}$. As in the previous section, by invoking Weyl's theorem,  there exists a $d$ such that:
\begin{eqnarray}
2 \pi d \cos(\gth_{i,i,\ell}) <\gre \mod 2\pi
\end{eqnarray}
for all $\ell=1,\ldots,L_{i,i}$ and
\begin{eqnarray}
\pi-{\frac{\gre}{2}<2 \pi d \cos(\gth_{i,j,\ell}) }<\pi+\frac{\gre}{2} \mod 2\pi
\end{eqnarray}
for all $\ell=1,\ldots,L_{i,j}$ and $j \neq i$.
By continuity of $g(\gth;d)$, for a given $\gd$, there exists an $\gre$ and a $d_i=d$ such that:
\begin{align}
    g(\gth_{i,i,\ell};d_i)&>1-\gd, \quad \ell=1,...,L_{i,i}\\
     g(\gth_{i,j,\ell};d_i)&<\gd, \quad \ell=1,...,L_{i,j}, j \neq i.
\end{align}

We conclude that one can suppress all specular multipath components of the interfering signals to any desired level. Hence, the resulting received signal is given by:
\begin{equation}
   \mathbf{y}_i(t)=\sum_{\ell=1}^{L_{i,i}}\kappa(\gth_{i,i,\ell};d_i) \grg_{i,i,\ell} \mathbf{x}_i(t-\tau_{i,i, \ell})+\tilde{\vz}_i(t)
\label{eq:IC_model4}
\end{equation}
where
\[
\tilde{\vz}_i(t)=\vz_i(t)+\vz'_i(t)
\]
is composed of the receiver noise as well as the  residual interference at receiver $i$, $$\vz'_i(t)=\sum_{k \neq i}\sum_{\ell=1}^{L_{i,k}}\kappa(\gth_{i,k,\ell};d_i) \grg_{i,k,\ell}  \mathbf{x}_k(t-\tau_{i,k, \ell}).$$
Note that the power of the residual interference satisfies:
\begin{equation}
    E\left\| \vz'_i(t)\right\|^2<\gd  \sum_{k \neq i} E\left|\vx_k(t) \right|^2 \sum_{\ell=1}^{L_{i,k}} |\grg_{i,k,\ell}|^2
\end{equation}
By selecting $\gd$ sufficiently small,  $E|\vz'_i(t)|^2$ can  be
 can be made arbitrarily small. Moreover, for all desired signal paths $g(\gth_{i,i,\ell};d_i)=\left|\kappa(\gth_{i,i,\ell};d_i) \right|^2 $ are (simultaneously) arbitrarily close to $1$ by a proper choice of $\gd$. It follows that \eqref{eq:IC_model4} amounts to a standard ISI channel, with coefficients arbitrarily close to the interference-free ISI channel.
\section{Ergodic nulling for the MIMO interference channel}
\label{sec:MIMO}
We now turn to analyze the MIMO interference channel where for simplicity we assume that the number of transmit and receive antenna elements as well as RF chains is the same for all transmitter and receiver pairs, i.e., of dimensions $N_t$, $N_r$, $t$ and $r$.

Following the vast literature of physical spatial point-to-point MIMO channel models, we note that the $N_r\times N_t$  MIMO channel between the transmit antennas of user $j$ and the receive antennas of user $i$ can be described as
\begin{equation}
\label{def:H_mimo}
    \mH_{i,j}=\sum_{\ell=1}^{L_{i,j}} \grg_{i,j,\ell} \va_{R,i}(\gth_{i,i,\ell}) \va_{T,j}(\psi_{i,j,\ell})^T
\end{equation}
where, $\psi_{j,\ell},\gth_{i,\ell}$ are the DoD between transmit array $j$ and reflector $\ell$ and the DoA between receive array $i$ and reflection $\ell$. Without loss of generality we also assume that $|\grg_{i,j,\ell}|$ is monotonically decreasing in $\ell$.
As is common in the MIMO literature, we assume that the scattering is sufficiently rich. In the present context, this requires that $L_{i,i} \geq t$ for all $i$ so that (almost surely) for all $i$, we have $\hbox{rank}(\mH_{i,i}) \geq t$.
The following theorem holds:
\begin{theorem}
\label{thm2}
Let $t,N_t$ be given and assume that $r=t+1$, $L_{i,i} \geq t$ and $N_t=t$. Further, assume that each receiver has directional CSI. Then, for any  $\delta>0$, there is a sufficiently large $N_r$ and a selection matrix $\mS_{R,i}$, such that for user $i$ any rate satisfying
\begin{equation}
R_i \le \log \left|\mI+\frac{P}{\gs^2 t} \mG_i \mA_{T,i} \mA_{T,i}^H \mG_i^H \right|-\gd
\end{equation}
is achievable in the  $t/N_t/N_r/(t+1)$ interference channel,
where
\[
\mG_i={\rm diag}\{\grg_{i,i,1},...,\grg_{i,i,t}\}
\]
and
\[
\mA_T=\left[\va_{T,i}(\psi_{i,i,1}),\ldots,\va_{T,i}(\psi_{i,i,t}) \right]
\]
Furthermore, if the transmitter has CSI,
then any rate
\[
R_i \leq \max_{\mQ} \log \left|\mI+ \mG_i \mA_{T,i} \mQ \mA_{T,i}^H \mG_i^H \right|-\gd
\]
is achievable where $\mQ$ is a positive semi-definite matrix satisfying $\trace(\mQ)=P$.
\end{theorem}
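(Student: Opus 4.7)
The plan is to extend the ergodic nulling of Theorem~\ref{thm:main} by choosing $t+1$ antennas from the large $\gl$-spaced receive array so that one receive dimension absorbs all unwanted signals while the remaining $t$ dimensions realize the desired interference-free MIMO channel. The structural choice I would make is to place the selected antennas at positions $0, d, 2d, \ldots, td$ (in units of $\gl$) for an integer $d$ to be determined, so that the receive array manifold takes the Vandermonde form
\[
\va_R(\gth;d) = \frac{1}{\sqrt{t+1}} \bigl[1, z, z^2, \ldots, z^{t}\bigr]^T, \qquad z = e^{j 2\pi d \cos\gth},
\]
and every receive response is parameterized by the single phase $2\pi d \cos\gth$.

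Next, I invoke Weyl's equidistribution theorem as in the proof of Theorem~\ref{thm:main}, but now jointly on every cosine appearing at receiver $i$: the $t$ strongest desired paths $\{\cos\gth_{i,i,\ell}\}_{\ell=1}^{t}$ together with the ``bad'' set comprised of the weaker desired paths $\{\cos\gth_{i,i,\ell}\}_{\ell>t}$ and all interference DoAs $\{\cos\gth_{i,j,\ell}:j\ne i\}$. Assumption A9, extended to the multipath setting of Section~\ref{sec:multipath}, makes these cosines together with $1$ almost surely linearly independent over $\mathbb{Q}$, so for any $\gre>0$ Weyl supplies an integer $d$ with $d\cos\gth \mod 1$ within $\gre$ of $0$ for every bad direction and within $\gre$ of $\ell/(t+1)$ for the $\ell$th desired direction. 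With these targets met, each bad response approximates the zeroth column of the $(t+1)$-point DFT, $\frac{1}{\sqrt{t+1}}\vone$, while the $\ell$th desired response approximates the $\ell$th DFT column; since the $t+1$ DFT columns are mutually orthonormal, the $(t+1)\times t$ matrix $\mA_R$ of desired array responses is approximately an isometry whose column space is approximately orthogonal to every interference response and every weak self-path.

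Applying the matched filter $\mA_R^H$ to the selected receive signal then collapses the observation to $t$ dimensions: the desired signal reduces to the effective channel $\mG_i \mA_{T,i}^{T} \vx_i$ (exactly the $t\times t$ matrix appearing in the theorem's log-det formula) up to a vanishing mismatch; the interference contribution is driven below any prescribed level; and the additive noise stays approximately white because $\mA_R$ is approximately unitary. Allocating equal power $P/t$ to each of the $t$ transmit streams then yields the log-det rate of the theorem up to a slack induced by the $\gre$ mismatches; by continuity of $\log\det(\mI+\cdot)$ on the cone of positive semi-definite matrices, this slack shrinks to zero with $\gre$ and is absorbed into the arbitrary $\gd$. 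The CSI-at-transmitter extension is then immediate: once the effective channel coincides with the interference-free MIMO channel up to an arbitrarily small perturbation, optimizing the transmit covariance $\mQ$ subject to $\trace(\mQ)=P$ produces the maximum claimed in the second part of the theorem.

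The hard part, and the principal step beyond Theorem~\ref{thm:main}, is the quantitative translation of the per-direction accuracy $\gre$ into a uniform perturbation of the log-det rate: the residual interference (which enters as a rank-one perturbation of the noise covariance at receiver $i$) and the slight departure of $\mA_R$ from exact unitarity (which perturbs the singular values of the effective MIMO channel) must be controlled simultaneously by a single $\gre$, extending the continuity step at the end of the proof of Theorem~\ref{thm:main} from scalar gains to matrix spectra.
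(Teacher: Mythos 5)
Your construction is correct, but it is genuinely different from the one in the paper. The paper keeps the two-element beamformers of Theorem~\ref{thm:main}: it fixes a common reference antenna (element $0$) and then, for each desired path $\ell=1,\ldots,t$, invokes Theorem~\ref{thm:main} separately -- treating path $\ell$ as the desired direction and \emph{all} other paths (the other desired paths included) as interference -- to obtain an individual spacing $n_{i,\ell}$; the combining matrix $\mW_i$ then has rows $\frac{1}{\sqrt{2}}(\ve_0+\ve_{n_{i,\ell}})^T$, and the whole argument rests on the identity $\mW_i\mS_{R,i}\mA_{R,i}=\mI+\mD_i$ with $\|\mD_i\|_\infty<\gd$, after which continuity of the log-det gives the rate. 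You instead use a single spacing $d$ and a uniform $(t+1)$-element sub-array at $0,d,\ldots,td$, with one joint application of Weyl steering the $\ell$th desired phase to $\ell/(t+1)$ and every unwanted phase to $0$, so that the receive responses become (approximately) orthonormal DFT columns and the matched filter $\mA_R^H$ does the separation. Each route has something to recommend it. The paper's scheme needs only that each of the $t$ pairwise spacings exist (the array must contain only the largest of them) and keeps the combiner $2$-sparse; your scheme needs an aperture of $td$ and a higher-dimensional Weyl target, but in exchange the effective combining matrix is asymptotically unitary, so the post-combining noise stays white. That last point is not merely cosmetic: in the paper's construction the $t$ beamformers all share antenna $0$, so $\mW_i\mW_i^H=\frac{1}{2}(\mI+\vone\vone^T)$ and the post-combining noise is correlated -- the stated rate $\log\left|\mI+\frac{P}{t\gs^2}\mHt_i'\mHt_i'^H\right|$ tacitly assumes a whitening step the proof does not spell out -- whereas your DFT-based combiner sidesteps the issue entirely. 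Your closing remark correctly identifies the one step both arguments leave informal, namely converting the per-direction accuracy $\gre$ into a uniform bound on the log-det (residual interference entering the noise covariance plus the perturbation of the singular values of the effective channel); the paper handles this with the same appeal to continuity that you propose, so this is not a gap relative to the paper's own standard of rigor.
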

\begin{proof}
Let the transmitter use an i.i.d. isoptropic Gaussian codebook of dimension $t$ and power $\nicefrac{P}{t}$ per dimension.
Let $\ve_i$ denote the  $(t+1)-$dimensional standard unit vectors.
The receiver uses a selection matrix $\mS_{R,i}$ followed by a beamforming matrix
$\mW_i=\left[\vw_1,...,\vw_t \right]$. This effectively translates to requiring that the vectors $\vw_i$ satisfy $\|\vw_i\|_0=2$. Furthermore, we can choose
\[
\vw_{i,\ell}=\frac{1}{\sqrt{2}}\left(\ve_0+\ve_{n_{i,\ell}}\right).
\]
Here, $(\mS_{R,i})_{\ell,n_{i,\ell}}=1$ if and only if the antenna $n_{i,\ell}$ is selected such that the beamformer $\vw_{i,\ell}$ receives only direction $\gth_{i,i,\ell}$ (and approximately nulling all other directions, both from the desired user as well as from all others users). Recall that by  Theorem~\ref{thm:main}, this is possible.

Thus, user $i$ obtains an equivalent MIMO channel
\begin{equation}
   \mathbf{y}_i= \mHt_i \mathbf{x}_i+\tilde{\vz}_i
\label{eq:IC_MIMO}
\end{equation}
where
\begin{align}
\label{eq:MIMO_channel_matrix}
\mHt_i=\mW_i\mS_{R,i}\mA_{R,i} \mG_i \mA_{T,i}
\end{align}
and
\[
\tilde{\vz}_i=\vz_i+\vz'_i
\]
is composed of the receiver noise as well as the  residual interference at receiver $i$.
Note that the power of the residual interference can be made as small as desired.

By construction
\begin{align}
    \mW_i\mS_{R,i}\mA_{R,i} = \mI+\mD_i
\label{eq:MIMO_equivalent_channel}
\end{align}
and $\|\mD_{i}\|_{\infty}<\gd$.
Hence, $\mHt_i$ can be made arbitrarily close to the channel
$\mHt'_i=\mG_i \mA_{T,i}$.

Thus, an achievable rate for this channel is given by
\[
R(\mHt'_i)=\log\left|\mI+\frac{P}{t\gs^2}  \mG_i \mA_{T,i}   \mA_{T,i}^H \mG_i^H \right|.
\]
The case of full CSIT follows by standard MIMO techniques.
\end{proof}
This should be compared against the isotropic transmission interference-free  benchmark with orthogonal channels
\[
\bar{R}(\mHt'_i)=\sum_{\ell=1}^{t+1}\log\left(1+|\grg_{i,\ell}|^2\frac{P}{\gs^2} \right).
\]
For large $t$, the two rates nearly coincide.

\begin{remark}
When $L_{i,i}>t$ we loose the low energy paths which are not included in the receive  beamformer. In this case, higher performance can be achieved by receiving the energy of the residual paths by increasing the number of receive chains up to $L_{i,i}$.  However, in realistic channel model, the number of dominant paths is relatively small.
\end{remark}
{\em Example: Three-user $2/2/N_r/3$ symmetric MIMO interference channel.}
Assuming a specular multipath model with finitely  many reflections  for every user and at least two for each desired user  (at the respective receiver), we note that we can achieve a total of $2$ DoFs per user. This is larger than the $\nicefrac{3}{2}$ DoFs per user achieved (in the generic) MIMO interference channel \cite{cadambe2008interference,bresler2014feasibility,yetis2010feasibility}. Moreover, the proposed scheme employs a much simpler transmission scheme, which does not require any CSI at the transmitter.

\section{Optimizing the receiver for a given array}
 Similarly to Theorem \ref{thm:main},
Theorem~\ref{thm2}  guarantees that interference can be suppressed to any desired level over finite multipath channels. However, it does not exploit the full optimization parameter space. Ultimately, our goal is to maximize the signal-to-interference-plus-noise ratio by properly choosing the antennas and the  beamformers corresponding to each source.
The straightforward approach would be to enumerate over all subsets of $r$ antennas and evaluating the SINR, for the optimal linear receiver. The complexity of this algorithm is prohibitive and simpler algorithms are called for.

 We begin by proposing a simple algorithm for antenna selection and  beamformer design for the LOS $1/1/N_r/r$-interference channel. Then we generalize the technique to the scenario of  MIMO multipath channels.
 \subsection{Pairwise antenna selection for LOS channels}
 \label{LOS_opt}
Consider the  LOS $1/1/N_r/r$-interference channel.  We wish to select  $r$ antennas as well as the beamforming vector. Optimizing over both involves a combinatorial search over all possible subsets of $r$ antennas and  computing the SINR attained by the optimal linear receiver for each subset. The total complexity of this search is $O\left(r^3 (N_r)^r \right)$.
To reduce the computational complexity, we propose a simple sub-optimal technique based on Theorem \ref{thm:main} to directly  select the antennas. Choose antenna $0$ as a reference. For each user, we search over the antennas and evaluate the two-antenna beamforming vector that maximizes the SINR  for this user. This amounts to computing
 \begin{align}
    \vw_i&=\arg \max_{\vw\in {\mathbb C}^{r}} \frac{P_i|\vw^H \va(\gth_i)|^2}
    {\sum_{j \neq i} P_j|\vw^H \va(\gth_j)|^2+\gs^2 \|\vw\|^2}  \nonumber \\
    {\rm subject\ to}:& \quad  \|\vw\|_
    0=2 \nonumber \\
                     &  \quad  w_0 =1.
\label{def:SINR_opt}
\end{align}

For each user, having chosen the antennas as above, we may further improve the combining weights applied to the chosen antennas.

To that end, let $n_0,...,n_{r-1}$ be the indices of the selected antennas and let $\mS_R$ be the corresponding selection matrix defined by $\left(n_0,...,n_{r-1}\right)$.
We can now maximize the SINR of each user by optimizing
\begin{align}
    \vw_i&=\arg \max_{\vw\in {\mathbb C}^{N_r}} \frac{P_i|\vw^H  \va(\gth_i)|^2}
    {\sum_{j \neq i} P_j|\vw^H \va(\gth_j)|^2+\gs^2 \|\vw\|^2} \nonumber\\
     {\rm subject\ to}:& \quad  {\rm supp}(\vw_i)=\left(n_0,...,n_{r-1}\right),
\end{align}
where ${\rm supp}$ defines the support.
Since the directions of interferers are assumed known, we can use the interference covariance-based beamformer \cite{gu2012robust} where the support constraint is incorporated by the selection matrix $\mS_R$:
\begin{equation}
    \vw_i=\mR_n^{-1} \mS^H_R\va(\gth_i)
\end{equation}
\begin{equation}
  \mR_n=\sum_{j \neq i} P_j \mS_R^H \va(\gth_j)\va(\gth_j)^H\mS_R+\gs^2 \mI.
\end{equation}
As discussed in \cite{ehrenberg2010sensitivity}, there is significant benefit in terms of robustness when using the interference covariance as a basis for beamforming instead of the received signal covariance matrix.

 \subsection{Antenna selection and beamforming for finite multipath  MIMO channels}

The symmetric MIMO $t/t/N_r/r$-MIMO interference channel case can be treated similarly to the  specular multipath interference channel with some changes to the receiver structure. Assume that the MIMO channel is sufficiently rich so that the total number of multipath components between each transmitter and its respective receiver is larger than the number of spatial streams $t$. We focus on the case where CSI is not available at the transmitter (beyond agreed upon transmission rate) and thuse  we assume isotropic transmission,
Recall that each path in the MIMO channel is described by the propagation matrix \eqref{def:H_mimo}.
The first phase of optimization selects a reference antenna and a single  antenna for each desired spatial reflection $\ell$ and a beamforming vector with two non-zero elements leading to \eqref{eq:MIMO_equivalent_channel}.

We assume a narrowband signal model, where the bandwidth of the transmitted signal is significantly smaller than the carrier frequency. Thus, the delays translate into phases. While applying Theorem~\ref{thm2} directly requires $L_{i,i}$ receive chains, we consider the equivalent multipath channel for each transmitted signal $x_m$, $m=1,\ldots,t$ defined by:

\begin{align}
\label{def:effective_channel}
    \vh_{i,j,m}=\sum_{\ell=1}^{L_{i,j}}\grg_{i,j,\ell} e^{j 2\pi f_c \gt_{i,j,\ell}} \va(\gth_{i,j,\ell})a_m(\psi_{i,j,\ell}),
\end{align}
where $f_c$ is the carrier frequency, and $a_m$ is the $m$'th entry of $\va$.
Hence, the signal to interfering users plus noise ratio for signal $m$  at receiver $i$ is given by

%
\begin{align}
   \vw_m = \arg \max_{\vw\in {\mathbb C}^{r}} \frac{P_i|\vw^H \vh_{i,i,m}|^2}
    { I_{i,m}+\gs^2 \|\vw\|^2}  \nonumber \\
    {\rm subject\ to}:& \quad  \|\vw\|_
    0=2 \nonumber \\
                     &  \quad  w_0 =1,
\end{align}
where $I_{i,m}$ is the total interference when receiving signal $m$:
\begin{align}
   I_{i,m}=\sum_{j \neq i}
   P_j\sum_{m} \left| \vw^H \vh_{i,j,m}\right|^2.
\end{align}

Let
\begin{align}
    \mW=\left[\vw_1,\ldots,\vw_t \right].
\end{align}
By construction, $\mW \in {\mathbb C}^{N_r \times t}$ is a matrix with $1$'s in the first row and another single non-zero element in each column, thus implicitly defining the selected antennas.
Let
\begin{align}
    \mH_i=\left[\vh_{i,1},\ldots \vh_{i,t} \right]
\end{align}
be the equivalent ISI channel for all the transmitted signals.
Treating interference as noise, for every user $i$ the equivalent MIMO channel is now given by:
\begin{align}
    \vy_i=\mW^H \mH_i \vx_i + \mW^H\sum_{j \neq i}  \left(\mH_{i,j}\vx_j+ \vz_i\right).
\end{align}
The corresponding mutual information is now readily  computable. Furthermore, the equivalent direct channel $\mW^H \mH_i$ is diagonally dominant with high probability. Therefore, simple MIMO decoding techniques can be employed.

In the closed loop scenario, the transmitter knows the equivalent channel and may apply covariance shaping to maximize the mutual information for the equivalent channel \eqref{eq:IC_MIMO}. Furthermore, the transmission architecture will be greatly simplified by using the  singular value decomposition to the resulting effective channel (including the covariance shaping matrix at the transmitter and the noise whitening matrix at the receiver).

\section{Simulations}
\label{sec:sim}
The performance of the proposed scheme in several scenarios is studied in this section via simulation.

In the first set of simulations, we tested the robustness of the scheme for the basic $1/1/N_r/2$  LOS interference channel of Section~\ref{sec:LOS_int_channel}, where we also allow for some directional errors. For SNR values ranging from $-5$ dB to $20$ dB, we generated $100$ realizations of a LOS four-user interference channel. We repeated the experiment for three values  $d_{\max}=50,100,500\gl$ which is a reasonable number for practical massive MIMO scenarios. We evaluated the mutual information of user $1$, with all transmitters randomly located at directions chosen between $0$ and $180$ degrees. All interferers were assumed to be received with the same power. 
As a benchmark for comparison we took non-naive time-division multiple access (TDMA), with two users transmitting per time slot, assuming a receiver that applies MMSE nulling of the undesired signal.
We calculated the average achievable rate over all the channel realizations, optimized over $d,\phi$ using a full search with $1^o$ resolution in $\phi$ and a $\gl/2$ uniform linear array. To test for robustness, we also evaluated the performance of a mismatched receiver suffering from i.i.d receiver directional errors with $\gs_{\gth}=0.1^o, 0.05^o, 0.01^o$, for $d_{\max}=50,100,500\gl$ respectively.  The results are depicted in Figures~\ref{fig:vsnr_d50}-\ref{fig:vsnr_d500}. The interference-free rate is nearly attained up to an SNR of roughly $10$ dB for $d_{\max}=100\gl$, and even up to roughly $20$ dB for $d_{\max}=500\gl$.
The slowing of the growth of the rates attained by the proposed  scheme is due to the limited size of the array. It is seen that the scheme demonstrates reasonable sensitivity to small errors in direction estimation.

Finally, we tested the performance of the scheme in a scenario of a $6$-user  interference channel, setting $d_{\max}=200\gl$. As expected, the gain over TDMA is smaller. This suggests that combining ergodic nulling with TDMA may be beneficial, particularly at high SNR.

To test the dependence of the achievable rates on $d_{\max}$, we chose a fixed SNR of $10$ dB and computed the achievable rate as a function of $d_{\max}$. The results are depicted in Figure~\ref{fig:vd}. While attaining the interference-free rate requires a separation of roughly  $100 \gl$, very significant performance gain over non-naive TDMA are achieved even at $d=30\gl$, for which a rate gain of  $50\%$ is achieved.
\begin{figure}[t]
\centering
\includegraphics[width=0.85\columnwidth]{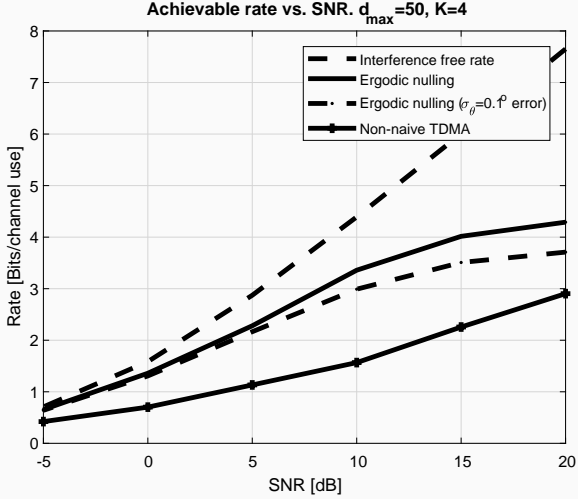}
\caption{Four-user interference channel with SIR=-$5$ dB and $d_{\max}=50 \gl$. The directions of the users is uniform and $100$ random channel realizations are drawn.}
\label{fig:vsnr_d50}
\end{figure}
\begin{figure}[t]
\centering
\includegraphics[width=0.85\columnwidth]{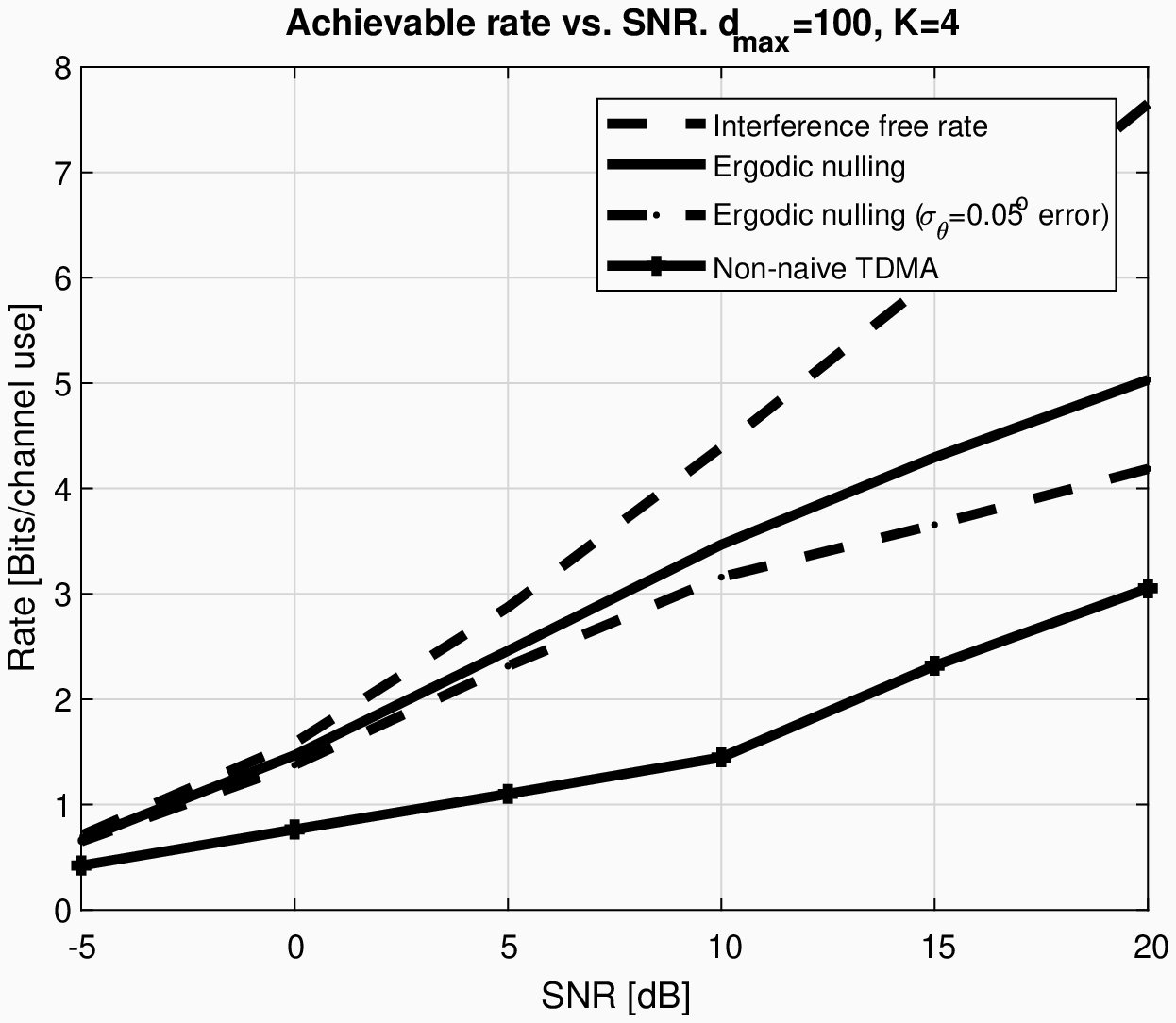}
\caption{Four-user interference channel with SIR=-$5$ dB and $d_{\max}=100 \gl$. The directions of the users is uniform and $100$ random channel realizations are drawn.}
\label{fig:vsnr_d100}
\end{figure}
\begin{figure}[t]
\centering
\includegraphics[width=0.85\columnwidth]{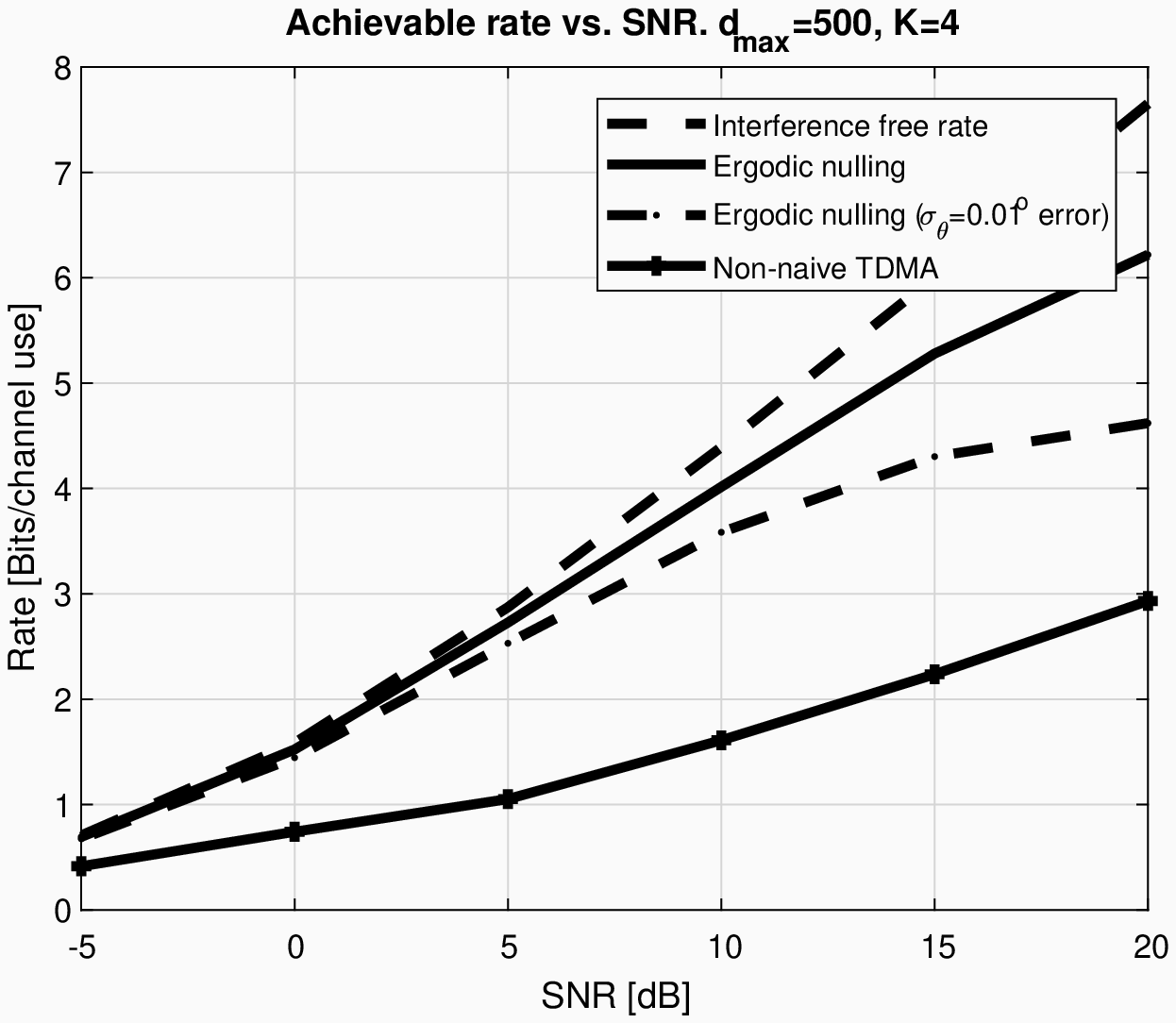}
\caption{Four-user interference channel with SIR=-$5$ dB and $d_{\max}=500 \gl$. The directions of the users is uniform and $100$ random channel realizations are drawn.}
\label{fig:vsnr_d500}
\end{figure}
\begin{figure}[t]
\centering
\includegraphics[width=0.85\columnwidth]{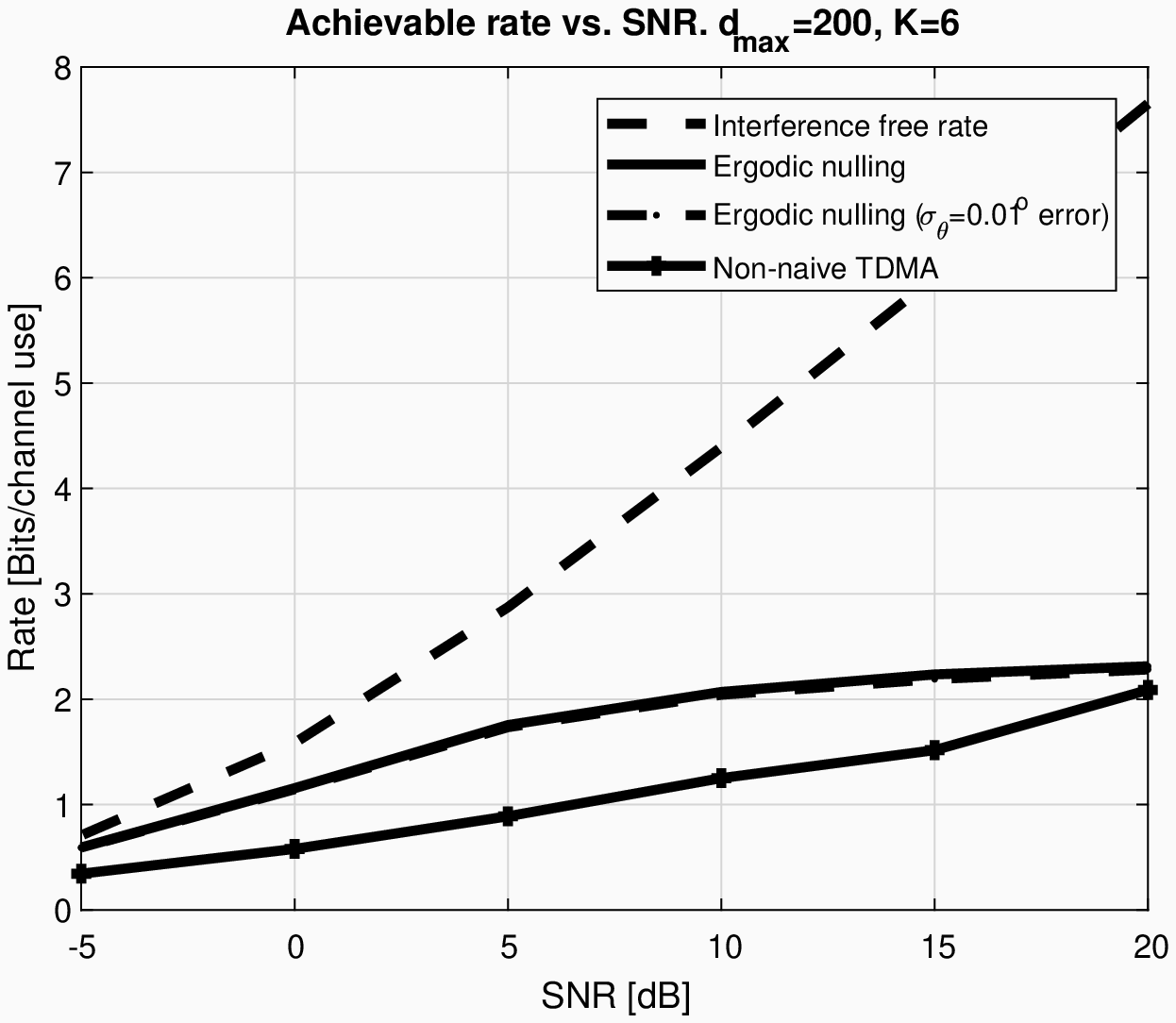}
\caption{Six-user interference channel
with SIR=-$5$ dB and $d_{\max}=200 \gl$.
The directions of the users is uniform and $100$ random channel realizations are drawn.}
\label{fig:vsnr_N6d200}
\end{figure}

\begin{figure}[t]
\includegraphics[width=0.9\columnwidth]{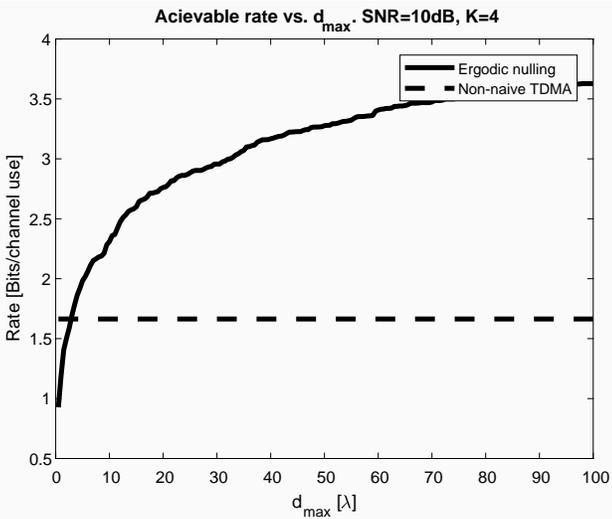}
\caption{Four-user interference channel where $100$ random channel realizations are drawn. 100 random channels. SNR=$10$ dB, SIR=-$5$ dB.}
\label{fig:vd}
\end{figure}

Finally, we simulated the performance attained in the scenario of a MIMO interference channel as considered in Section~\ref{sec:MIMO}. We evaluated the achievable rate of one user in a symmetric $2/2/N_r$ 3 -user-interference channel. We assumed that two reflections of the desired signals are received in addition to two other uncoordinated interfering signals. Note that with three receive antennas, spatial nulling will fail as the total number of streams (desired and undesired) is four. Since we do not assume CSI to be available at the transmitter, known interference alignment techniques are inapplicable. We considered two  benchmarks for comparison. The first is the interference-free rate corresponding to  the selected three antennas; the second is the maximal rate achieved via optimal selection of any three antennas from the array. We assumed that all paths are received with equal power, while the transmitter employs a $\gl/2$-spaced two-antenna array and isotropic transmission.

The DoD at the transmitters of all desired and interfering signals as well as the DoA at the receiver were chosen uniformly at random. The simulations were carried out for $2$ and $4$ external interferers.
We have used both a full-search algorithm and the proposed simplified two-step optimization procedure. Three antennas out of $N_r=250$ antennas were selected. Figure~\ref{fig:MIMO4} depicts the results for the case of a $2 \times 3$ MIMO system with 2 randomly located interferers. We present the achievable rates and the interference-free rates, for the optimal selection of $3$ antennas as well as for the simplified antenna selection scheme with optimal linear beamforming. It is clearly seen that for optimal selection, near interference-free rates are attained. The simplified selection technique is also near-optimal up to an SNR of about $20$ dB. We repeated the experiment with $4$ external interferers, a scenario where standard MIMO techniques
are expected to yield very poor results.
Again, 
optimal selection yields almost interference-free rates as implied by the main theorem. Moreover, even the simplified selection technique achieves $10$ bits per channel use. This amounts to almost $70\%$ of the interference-free rate 
at an SNR of $20$ dB.
\begin{figure}[htbp]
\includegraphics[width=0.9\columnwidth]{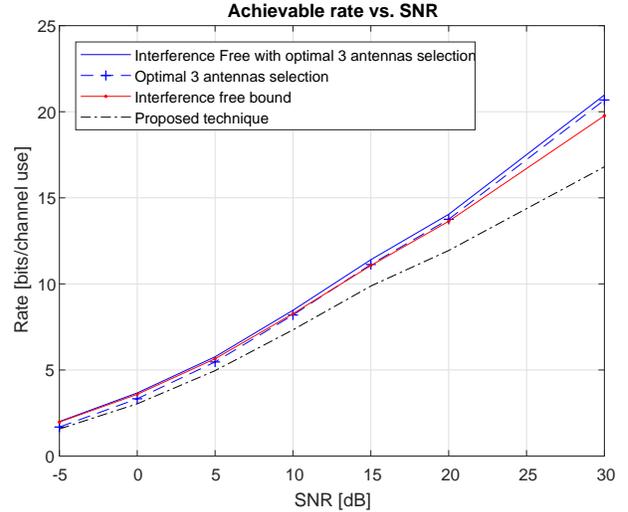}
\caption{$2 \times 3$ MIMO system with 2 external interferers. 50 random channels. $d_{\max}=250$.}
\label{fig:MIMO4}
\end{figure}

\begin{figure}[htbp]
\includegraphics[width=0.9\columnwidth]{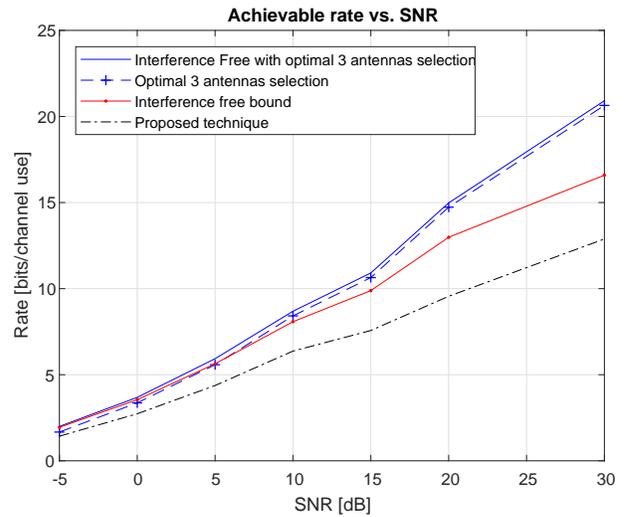}
\caption{$2 \times 3$ MIMO system with 4 external interferers. 50 random channels. $d_{\max}=250$.}
\label{fig:MIMO4}
\end{figure}

\section{Discussion}
\label{sec:disc}

We proposed a novel technique for interference suppression over   line-of-sight and specular multipath interference channels. The approach is based on  judiciously setting the distance between  two receive antennas to attain a beamforming vector with approximate nulls in the direction of the interferers. This can be implemented using antenna selection applied to a large linear array and the results were presented in this context.
It is important to note that adjusting antenna separation may equally be accomplished by other means. Examples include, e.g., rotating or moving array, or antenna selection applied to general arrays (e.g., circular or rectangular).

The main theorem shows that we can achieve half the degrees of freedom afforded by the system.
A significant advantage of the proposed approach compared to traditional interference alignment techniques is  that it only requires receive-side CSI.



In a practical implementation, it is preferable to limit the dimensions of the receive antenna array. To that end, a receiver could  divide the interferers into two groups, a small group of strong interferers
for which approximate nulling is required  and a residual that is treated as noise. Moreover, from a system perspective, the users could be partitioned into   groups in which the number of strong interferers is limited.

Since the proposed approach is capable of suppressing any (finite) number of interferers, it  is  applicable also for non-symmetric interference channels with a configuration of  $t_i/N_{T,i}/N_{R,i}/r_i$, $i=1,...,K$, as long as all $N_{R,i}$ are large enough and for all $i$ we have $r_i \geq t_i+1$.

Similarly, the results can be easily extended  to configurations of the  interference channel with  $t \geq r+1$  as long as $N_t$ is sufficiently large and $N_r\geq r$, provided that the directional CSI is available at the transmitters.

Finally, we note that the advocated approach easily extends to the model of an interference multiple-access channel. Namely, given $r$ receive chains, $r-1$ single-antenna users can be afforded a full DoF while suppressing an arbitrary number of interferers, thus yielding a DoF utilization factor of $1-\nicefrac{1}{r}$.  A dual result holds for  the downlink.

\bibliographystyle{IEEEtran}

\end{document}